\newcommand{\x}{x}
\newcommand{\Dx}{\partial}
\newcommand{\Wx}{\K[\x, \Dx]}
\newcommand{\Kx}{\K[\x]}
\newcommand{\bigO}{{\mathcal{O}}}
\newcommand{\bigOsoft}{\tilde{\mathcal{O}}}
\newcommand{\K}{\mathbb{K}}
\newcommand{\bN}{\mathbb{N}}
\newcommand{\V}{\mathbb{V}}
\newcommand{\mathe}{\mathrm{e}}
\def\gathen#1{{#1}}
\def\hoeven#1{{#1}}
\newtheorem{lem}{Lemma}
\newtheorem{thm}{Theorem}
\title[Quasi-optimal multiplication of linear differential operators]{Quasi-optimal multiplication of \\ linear differential operators}
\author[Alexandre Benoit]{Alexandre Benoit
}
\address{UPMC (France)}
\email{Alexandre.Benoit@lip6.fr}
\author[Alin Bostan]{Alin Bostan$\mbox{}^\dagger$}\thanks{$\dagger$ Supported by the Microsoft Research\,--\,INRIA Joint Centre.}
\address{INRIA (France)}
\email{Alin.Bostan@inria.fr}
\author[Joris van der Hoeven]{Joris van der Hoeven$\mbox{}^\ddagger$}\thanks{$\ddagger$ Supported by the ANR-09-JCJC-0098-01 {\textsc{MaGiX}} project, as well as by a
Digiteo 2009-36HD grant and R\'egion \^Ile-de-France.
}
\address{CNRS \& \'Ecole polytechnique (France)}
\email{vdhoeven@lix.polytechnique.fr}
\date{\today}
\begin{document}
	
\begin{abstract} We show that linear differential operators with polynomial
coefficients over a field of characteristic zero can be multiplied in
quasi-optimal time. This answers an open question raised by van der Hoeven. 
\end{abstract}

\maketitle

\section{Introduction} The product of \emph{polynomials\/} and the product of
\emph{matrices\/} are two of the most basic operations in mathematics; the
study of their computational complexity is central in computer science.
In this paper, we will be interested in the computational complexity of
multiplying two \emph{linear differential operators}. These algebraic objects
encode linear differential equations, and form a non-commutative ring that
shares many properties with the commutative ring of usual
polynomials~\cite{Ore32,Ore33}. The structural analogy between polynomials and
linear differential equations was discovered long ago by Libri and
Brassinne~\cite{Libri1833, Brassinne1864, Demidov83}. Yet, the algorithmic
study of linear differential operators is currently much less advanced than in
the polynomial case: the complexity of multiplication has been addressed only
recently~\cite{vdHoeven02,BoChLe08}, but not completely solved. The aim of the
present work is to make a step towards filling this gap, and to solve an open
question raised in~\cite{vdHoeven02}.

Let $\K$ be an effective field.
That is, we assume data structures for representing
the elements of $\K$ and algorithms for performing the field
operations. The aim of \emph{algebraic complexity theory\/} is to study the
cost of basic or more complex algebraic operations over $\K$ (such as the cost
of computing the greatest common divisor of two polynomials of degrees less
than $d$ in $\K [ x ]$, or the cost of Gaussian elimination on an
$r \times r$ matrix in $\K^{r \times r}$) in terms of the number of operations
in $\K$. The \emph{algebraic complexity\/} usually does not coincide with the
\emph{bit complexity}, which also takes into account the potential growth of
the actual coefficients in $\K$. Nevertheless, understanding the algebraic
complexity usually constitutes a first useful step towards understanding the
bit complexity. Of course, in the special, very important case when the
field~$\K$ is finite, both complexities coincide up to a constant factor.

The complexities of operations in the rings $\K[\x]$ and $\K^{r \times r}$
have been intensively studied during the last decades. It is well established
that polynomial multiplication is a \emph{commutative complexity yardstick},
while matrix multiplication is a \emph{non-commutative complexity yardstick},
in the sense that the complexity of operations in~$\K[\x]$ (resp.~in~$\K^{r
\times r}$) can generally be expressed in terms of the cost of multiplication
in~$\K[\x]$ (resp.~in~$\K^{r \times r}$), and for most of them, in a
quasi-linear way~\cite{AhHoUl74,BiPa94,BuClSh97,Pan00,GaGe03}.

Therefore, understanding the algebraic complexity of multiplication in
$\K[\x]$ and $\K^{r \times r}$ is a fundamental question. It is well known
that polynomials of degrees~$< d$ can be multiplied in time $\mathsf{M}
(d ) =\mathcal{O} ( d \log d \log \log d )$ using
algorithms based on the Fast Fourier Transform
(FFT)~\cite{CoTu65,ScSt71,CaKa91}, and two $r \times r$ matrices in~$\K^{r
\times r}$ can be multiplied in time $\mathcal{O} ( r^{\omega} )$,
with $2\leqslant \omega \leqslant 3$ {\cite{Strassen69,Pan84,CoWi90}}. The current tightest
upper bound, due to Vassilevska Williams~\cite{VassilevskaWilliams12}, is
$\omega < 2.3727$, following work of Coppersmith and Winograd~\cite{CoWi90}
and Stothers~\cite{Stothers10}. Finding the best upper bound on~$\omega$ is
one of the most important open problems in algebraic complexity theory.

In a similar vein, our thesis is that understanding the algebraic complexity
of multiplication of linear differential operators is a very important
question, since the complexity of more involved, higher-level operations on
linear differential operators can be reduced to that of
multiplication~\cite{vdHoeven11}.

{}From now on, we will assume that \emph{the base field~$\K$ has characteristic zero}.
Let $\Wx$ denote the associative algebra
$\K\langle\x,\Dx;\Dx\x=\x\Dx+1\rangle$ of linear differential operators in
$\Dx =\frac{d}{dx}$ with polynomial coefficients in $\x$. Any element $L$ of
$\Wx$ can be written as a finite sum $\sum_i L_i(x) \Dx^i$ for uniquely
determined polynomials $L_i$ in $\Kx$. We say that $L$ has bidegree less than
$(d,r)$ in $(\x,\Dx)$ if~$L$ has degree less than~$r$ in~$\Dx$, and if all
$L_i$'s have degrees~less than~$d$ in~$\x$. The degree in $\Dx$ of $L$ is
usually called the \emph{order\/} of~$L$.

The main difference with the commutative ring $\K[x,y]$ of usual bivariate
polynomials is the commutation rule $\Dx\x=\x\Dx+1$ that simply encodes, in
operator notation, Leibniz's differentiation rule $\frac{d}{dx}(\x
f)=\x\frac{d}{dx}(f)+f$. This slight difference between $\Wx$ and $\K[x,y]$
has a considerable impact on the complexity level. On the one hand, it is
classical that multiplication in $\K[x,y]$ can be reduced to that of
polynomials in $\K[x]$, due to a technique commonly called \emph{Kronecker's
trick}~\cite{Moenck76,GaGe03}. As a consequence, any two polynomials of
degrees less than $d$ in $x$, and less than~$r$ in $y$, can be multiplied in
\emph{quasi-optimal time\/} $\bigO(\mathsf{M}(dr))$. On the other hand,
under our hypothesis that $\K$ has characteristic zero,
it was shown by van der Hoeven~\cite{vdHoeven02} that
the product of two elements from $\Wx$ of bidegree
less than~$(n,n)$ can be computed in time $\bigO(n^\omega)$. Moreover, it has
been proved in~\cite{BoChLe08} that conversely, multiplication in $\K^{n
\times n}$ can be reduced to a constant number of multiplications in $\Wx$, in
bidegree less than~$(n,n)$. In other words, multiplying operators of
\emph{well-balanced bidegree\/} is computationally equivalent to 
matrix multiplication.

However, contrary to the commutative case, higher-level operations in $\Wx$,
such as the \emph{least common left multiple\/} (LCLM) and the \emph{greatest common
right divisor\/} (GCRD), do not preserve well-balanced
bidegrees~\cite{Grigoriev90,BoChLiSa12}. For instance, the LCLM of two
operators of bidegrees less than $(n,n)$ is of bidegree less than
$(2n(n+1),2n) = \bigO (n^2,n)$, and this bound is generically reached. This is
a typical phenomenon: operators obtained from computations in $\Wx$ tend to
have much larger degrees in~$x$ than in $\Dx$.

In the general case of operators with possibly unbalanced degrees~$d$~in $\x$
and~$r$ in $\Dx$, the naive algorithm has cost $\bigO(d^2 r^2 \min(d,r))$;
a better algorithm, commonly attributed to Takayama, has complexity          
$\bigOsoft(d r \min(d,r))$. We refer to~\cite[\S2]{BoChLe08} for a review of these algorithms.
When $r \leqslant d \leqslant r^{4-\omega}$, the best current upper bound for multiplication is $\bigO(r^{\omega-2}d^2)$~\cite{vdHoeven02,vdHoeven11}.
It was asked by van der Hoeven~\cite[\S6]{vdHoeven02}
whether this complexity could be lowered to $\bigOsoft(r^{\omega-1}d)$.
Here, and hereafter, the soft-O notation $\bigOsoft(\,)$
indicates that polylogarithmic factors in $d$ and in $r$ are neglected.
The purpose of the present work is to provide a positive answer to
this open question. Our main result is encapsulated in the following theorem:

\begin{thm}\label{th:main} 
	Let $\K$ be an effective field of characteristic zero.
	Operators in $\Wx$ of bidegree less than $(d,r)$ in
$(\x,\Dx)$ can be multiplied using \[ \bigOsoft (\min(d,r)^{\omega-2} \, dr )\]
operations in $\K$.\medskip
\end{thm}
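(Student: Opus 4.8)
The plan is to reduce the product to a quasi-linear number of \emph{balanced} multiplications, to which the $\bigO(n^\omega)$ bound of van der Hoeven applies, and to organize these through a fast evaluation--interpolation scheme in an auxiliary variable. \emph{Reduction to $r\le d$.} The Fourier transform $\mathcal F\colon \x\mapsto\Dx,\ \Dx\mapsto-\x$ is an algebra automorphism of $\Wx$ (it preserves $\Dx\x-\x\Dx=1$) which exchanges the two degrees, sending an operator of bidegree less than $(d,r)$ to one of bidegree less than $(r,d)$. Since $\mathcal F$ and $\mathcal F^{-1}$ are computable in linear time and the target bound is symmetric in $d$ and $r$, I may assume $r\le d$, so that $\min(d,r)=r$ and the bound to prove reads $\bigOsoft(r^{\omega-1}d)$.

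\emph{Skew representation and blocking.} Setting $\theta=\x\Dx$, every operator of order $<r$ can be written $L=\sum_i \x^i A_i(\theta)$ with $\deg A_i<r$ and $-(r-1)\le i<d$, because $\x^j\Dx^k=\x^{j-k}\,\theta(\theta-1)\cdots(\theta-k+1)$. In the monomial basis $\theta$ is diagonal and $\x$ is the shift, so $L$ is the bi-infinite banded matrix that sends $\x^n\mapsto A_i(n)\,\x^{n+i}$ on its $i$-th diagonal: the band has width $\bigO(d)$ and each diagonal is a polynomial of degree $<r$. Operator product becomes matrix product, i.e. the twisted convolution $C_\delta(\theta)=\sum_{i+j=\delta}A_i(\theta+j)\,B_j(\theta)$, the shift $\theta\mapsto\theta+j$ being exactly the residue of the commutation rule $\Dx\x=\x\Dx+1$. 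Grouping rows and columns into consecutive blocks of size $r$, the matrix of $L$ becomes block-banded with $r\times r$ blocks and block-bandwidth $\bigO(d/r)$, and its block-diagonal of index $\gamma$, read as a function of the block position $J$, is an $r\times r$ matrix $\mathcal L_\gamma(J)$ whose entries are polynomials in $J$ of degree $<r$. The product law reads $\mathcal P_\Delta(J)=\sum_\beta \mathcal L_{\Delta-\beta}(J+\beta)\,\mathcal M_\beta(J)$: a convolution of length $\bigO(d/r)$ in the block-diagonal index, twisted by the position shift $J\mapsto J+\beta$.

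\emph{Main obstacle.} If the shift were absent this would be the product of two $r\times r$ matrix polynomials of degree $\bigO(d/r)$ in the block variable, computable by evaluation--interpolation at $\bigO(d/r)$ points (which exist since $\K$ has characteristic zero), each point costing one $r\times r$ scalar product, hence $\bigOsoft(r^{\omega-1}d)$ in total. The whole difficulty is therefore to absorb the position shift without destroying this budget. I would first apply the shearing $\mathcal L'_\gamma(J):=\mathcal L_\gamma(J-\gamma)$, after which $\mathcal P_\Delta(J)=\sum_\beta \mathcal L'_{\Delta-\beta}(J+\Delta)\,\mathcal M_\beta(J)$: the first factor is now sampled at the argument $J+\Delta$, which no longer involves the summation index $\beta$. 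This decouples the position shift from the block-diagonal convolution and makes an evaluation--interpolation in the block-diagonal generating variable possible, each of the $\bigOsoft(d/r)$ evaluation points reducing to a single multiplication of balanced operators of bidegree $\bigO(r)\times\bigO(r)$, performed in time $\bigO(r^\omega)$ by van der Hoeven's algorithm.

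The delicate point, and the one requiring the most care, is precisely that the shift must be treated \emph{exactly} in this manner. Handling it naively through the Taylor expansion $\theta\mapsto\theta+1=\exp(\partial_\theta)$ replaces each evaluation by $r$ balanced products (one per term of the expansion, since $\partial_\theta$ is nilpotent of order $r$ on degree-$<r$ polynomials), and thus yields only the weaker bound $\bigO(dr^\omega)$; the gain of the factor $r$ hinges entirely on the shearing and the resulting decoupling. It remains to check that all auxiliary transforms -- the FFT in the block variable and the conversions between the coefficient, value and diagonal representations -- cost only $\bigOsoft(dr)$, which they do. Summing the $\bigOsoft(d/r)$ balanced products and these transform costs gives $\bigOsoft(r^{\omega-1}d)$, and undoing the Fourier reduction yields the announced $\bigOsoft(\min(d,r)^{\omega-2}dr)$.
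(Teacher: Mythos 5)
Your overall architecture parallels the paper's --- one core evaluation--interpolation algorithm for one unbalanced case, plus the reflection $x\mapsto\partial$, $\partial\mapsto-x$ to transfer to the other --- but the two steps you yourself flag as needing care are precisely the ones that are asserted rather than proved, and the central one fails as stated.

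First, the reflection $\mathcal F$ is not ``computable in linear time.'' The input and output must be in canonical form (powers of $x$ on the left), and rewriting $\sum_{i,j}L_{i,j}\partial^j(-x)^i$ into canonical form by repeated use of $\partial x=x\partial+1$ costs $\Theta(dr\min(d,r))$ operations, which already exceeds the target $\bigOsoft(\min(d,r)^{\omega-2}dr)$ whenever $\omega<3$. Bringing this conversion down to $\bigOsoft(dr)$ is one of the paper's actual contributions (Lemma~\ref{lem:key} and Theorem~\ref{th:refl}): the diagonals of the reflected operator are obtained from those of $L$ by Taylor shifts $F_k(x)\mapsto F_k(x+1)$, computed fast \`a la Aho--Steiglitz--Ullman. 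Your reduction to $r\le d$ therefore rests on an unproved, and naively false, complexity claim.

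Second, and more seriously, the shearing does not decouple the twist. The identity $\mathcal L_{\Delta-\beta}(J+\beta)=\mathcal L'_{\Delta-\beta}(J+\Delta)$ is correct, but it only moves the dependence of the evaluation argument from the inner index $\beta$ to the output index $\Delta$. Evaluation--interpolation in a generating variable $z$ for the block-diagonal index needs a pointwise identity $\hat{\mathcal P}(\zeta,J)=\hat{\mathcal L}'(\zeta,\,\cdot\,)\,\hat{\mathcal M}(\zeta,J)$ with an argument independent of $\Delta$; no such argument exists, so the twisted convolution still cannot be computed by $\bigO(d/r)$ pointwise products. Moreover, even granting a decoupling, one evaluation point gives a priori a product of $r\times r$ matrices with degree-$<r$ polynomial entries in $J$, costing $\bigOsoft(r^{\omega+1})$ rather than the $\bigO(r^{\omega})$ of a single balanced operator product; over $\bigO(d/r)$ points this is $\bigOsoft(dr^{\omega})$, a factor $r$ over budget (indeed the generic matrix-polynomial representation of the output already has size $\Theta(dr^{2})$). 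The paper escapes exactly this obstacle not by shearing but by changing the evaluation objects: evaluating at exponential polynomials $x^n\mathe^{\alpha x}$ makes $L$ act block-diagonally through the shifted operators $L_{\ltimes\alpha_i}=L(x,\partial+\alpha_i)$ truncated at order $d$ in $\partial$, so the twist is absorbed into the automorphism $\partial\mapsto\partial+\alpha$, and the passage between $L$ and the truncations $(L_{\ltimes\alpha_i}^{\ast})_i$ is exactly Hermite evaluation/interpolation of the coefficients of $L$ in $\partial$, done in $\bigOsoft(dr)$ by Theorem~\ref{th:Hermite} (see Lemma~\ref{partial-eval-interpol-lem}). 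Your $\theta$-basis blocking has no analogous family of eigen-objects, and what is missing at this point is not a technicality but the crux of the theorem.
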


Actually, we will prove slightly more refined versions of this theorem
(see Theorems~\ref{partial-1-th} and~\ref{partial-2-th} below),
by making the hidden log-terms in the complexity explicit.

In the important case $d \geqslant r$, our complexity bound reads $\bigOsoft
(r^{\omega-1} d)$. This is quasi-linear (thus quasi-optimal) with respect to
$d$. Moreover, by the equivalence result from~\cite[\S3]{BoChLe08}, the
exponent of $r$ is also the best possible. Besides, under the (plausible,
still conjectural) assumption that $\omega=2$, the complexity in
Theorem~\ref{th:main} is almost linear with respect to the output size. For
$r=1$ we retrieve the fact that multiplication in $\K[x]$ in degree $<d$ can
be done in quasi-linear time~$\bigOsoft(d)$; from this perspective, the result
of Theorem~\ref{th:main} can be seen as a generalization of the fast
multiplication for usual polynomials.

In an expanded version~\cite{SkewMul} of this paper, we will show
that analogues of Theorem~\ref{th:main} also hold for other types of
\emph{skew polynomials}. More precisely, we will prove similar complexity bounds
when the skew indeterminate $\partial: f(x) \mapsto f'(x)$ is replaced by the Euler
derivative $\delta: f(x) \mapsto x f'(x)$, or a shift operator $\sigma^c: f(x)
\mapsto f(x + c)$, or a dilatation $\chi_q: f(x) \mapsto f(qx)$.
Most of these other cases are treated by showing that rewritings such as
$\delta \leftrightarrow x \partial$ or $\sigma^c \leftrightarrow \exp (c \partial)$
can be performed efficiently. We will also prove complexity bounds
for a few other interesting operations on skew polynomials.

\medskip \noindent {\bf Main ideas.} The fastest known algorithms for
multiplication of usual polynomials in $\K[x]$ rely on an
evaluation-interpolation strategy at special points in the base
field~$\K$~\cite{CoTu65,ScSt71,CaKa91}.
This reduces polynomial multiplication  to the ``inner product'' in $\K$.
We adapt this strategy to the case of
linear differential operators in $\Wx$: the evaluation ``points'' are
\emph{exponential polynomials\/} of the form $x^n e^{\alpha x}$ on which
differential operators act nicely. With this choice, the evaluation and
interpolation of operators is encoded by \emph{Hermite evaluation and
interpolation\/} for usual polynomials (generalizing the classical Lagrange interpolation),
for which quasi-optimal algorithms exist.
For operators of bidegree less than $(d,r)$ in $(\x,\Dx)$, with
$r\geqslant d$, we use $p = \bigO ( r/d )$ evaluation points, and encode
the inner multiplication step by $p$ matrix multiplications in size~$d$. All
in all, this gives an FFT-type multiplication algorithm for differential
operators of complexity $\bigOsoft(d^{\omega-1}r)$. 
Finally, we reduce the case $r < d$ to the case $r \geqslant d$.
To do this efficiently, we design a fast algorithm for the computation of the
so-called \emph{reflection\/} of a differential operator,
a useful ring morphism that swaps the indeterminates $x$ and $\Dx$,
and whose effect is exchanging degrees and orders.

\section{Preliminaries}

Recall that $\K$ denotes an effective field of characteristic zero.
Throughout the paper, $\K[x]_{d}$ will denote the set
of polynomials of degree less than $d$ with coefficients in the field~$\K$,
and $\Wx_{d,r}$ will denote the set of linear differential operators in $\Wx$
with degree less than~$r$ in~$\Dx$, and polynomial coefficients in
$\K[x]_{d}$.

The cost of our algorithms will be measured by the number of field operations
in $\K$ they use. We recall that polynomials in $\K[x]_{d}$ can be multiplied
within $\mathsf{M}(d) = \bigO(d \, \log d\,\log\log d)= \bigOsoft (d)$
operations in~$\K$, using the FFT-based algorithms in~\cite{ScSt71,CaKa91},
and that $\omega$ denotes a feasible exponent for matrix multiplication
over~$\K$, that is, a real constant $2\leqslant \omega \leqslant 3$ such that
two $r\times r$ matrices with coefficients in $\K$ can be multiplied in time
$\bigO(r^\omega)$. Throughout this paper, we will make the classical
assumption that $\mathsf{M}(d) / d$ is an increasing function in $d$.

Most basic polynomial operations in $\K[x]_{d}$
(division, Taylor shift, extended gcd, multipoint evaluation, interpolation, etc.)
have cost $\bigOsoft(d)$~\cite{AhHoUl74,BiPa94,BuClSh97,Pan00,GaGe03}.
Our algorithms will make a crucial use of the following result due to
Chin~\cite{Chin76}, see also~\cite{OlSh00} for a formulation in terms of
structured matrices.

\begin{thm}[Fast Hermite evaluation--interpolation]\label{th:Hermite}
  Let~$\K$ be an effective field of characteristic zero, let $c_0,\ldots, c_{k-1}$ be $k$ positive integers, and let $d = \sum_i c_i$.
 Given~$k$ mutually distinct points $\alpha_0,\ldots, \alpha_{k-1}$ in~$\K$ and a polynomial $P\in \K[x]_{d}$, one can
  compute the vector of~$d$ values
  \begin{eqnarray*}
    \mathcal{H} &=&
     ( P(\alpha_0), P'(\alpha_0), \ldots, P^{(c_0-1)}(\alpha_0), \ldots\ldots,\\
    && \phantom{(} P(\alpha_{k-1}), P'(\alpha_{k-1}), \ldots, P^{(c_{k-1}-1)}(\alpha_{k-1}))
  \end{eqnarray*}
  in $\bigO(\mathsf{M}(d)\log k) = \bigOsoft(d)$ arithmetic operations in $\K$.
  Conversely, $P$ is uniquely determined by~$\mathcal{H}$, and its coefficients can be recovered from~$\mathcal{H}$ in
  $\bigO(\mathsf{M}(d)\log k) = \bigOsoft(d)$ arithmetic operations in $\K$.
\end{thm}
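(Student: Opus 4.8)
The plan is to reduce both directions to divide-and-conquer arithmetic on a \emph{subproduct tree} built from the confluent moduli $m_i := (x-\alpha_i)^{c_i}$. The starting observation is that the Hermite data at $\alpha_i$ is equivalent, up to multiplication by the known factorials $0!,\ldots,(c_i-1)!$, to the truncated Taylor expansion of $P$ at $\alpha_i$, and that this expansion is exactly $R_i := P \bmod m_i$ read in the variable $x-\alpha_i$. Indeed $P \bmod (x-\alpha_i)^{c_i} = \sum_{j<c_i}\tfrac{P^{(j)}(\alpha_i)}{j!}(x-\alpha_i)^j$. Since the $\alpha_i$ are pairwise distinct, the $m_i$ are pairwise coprime and $\deg\prod_i m_i = \sum_i c_i = d$. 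Thus evaluation amounts to the simultaneous reduction $P \mapsto (P\bmod m_0,\ldots,P\bmod m_{k-1})$ followed by a Taylor shift at each leaf, while interpolation is the inverse reconstruction of $P$ (of degree $<d$) from its residues via the Chinese remainder theorem.

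First I would build the subproduct tree: a binary tree with the $m_i$ at its leaves, each internal node holding the product of its two children, and the root holding $M := \prod_i m_i$ of degree $d$. The tree has $\bigO(\log k)$ levels, and at each level the node polynomials have disjoint factor sets, so the sum of their degrees equals $d$. Computing one level from the level below then costs $\bigO(\mathsf{M}(d))$ after invoking the superadditivity of $\mathsf{M}$: since $\mathsf{M}(d)/d$ is increasing, $\mathsf{M}(a)+\mathsf{M}(b)\leqslant\mathsf{M}(a+b)$, so summing $\mathsf{M}(\deg)$ over a level is $\bigO(\mathsf{M}(d))$. Building the whole tree thus costs $\bigO(\mathsf{M}(d)\log k)$, and this tree will be reused in both directions.

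For evaluation I would run a \emph{going-down} pass. At the root $P\bmod M = P$ (as $\deg P<d$), and at each internal node with modulus $m=m_L m_R$ one recovers $P\bmod m_L$ and $P\bmod m_R$ from $P\bmod m$ by two fast Euclidean divisions of cost $\bigO(\mathsf{M}(\deg m))$; summing over a level gives $\bigO(\mathsf{M}(d))$, hence $\bigO(\mathsf{M}(d)\log k)$ in total. At each leaf a Taylor shift by $\alpha_i$ turns $P\bmod(x-\alpha_i)^{c_i}$ into $\sum_{j<c_i}\tfrac{P^{(j)}(\alpha_i)}{j!}(x-\alpha_i)^j$, and rescaling by the factorials yields the requested derivatives; these shifts cost $\sum_i\mathsf{M}(c_i)\leqslant\mathsf{M}(d)$, again by superadditivity.

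For interpolation the same tree is traversed in a \emph{going-up} pass reconstructing $P=\sum_i v_i\,(M/m_i)\bmod M$ with $v_i := R_i\,\bigl((M/m_i)^{-1}\bmod m_i\bigr)\bmod m_i$, the combination being performed by the node recursion $v_{\mathrm{node}} = v_L\,m_R + v_R\,m_L$ at cost $\bigO(\mathsf{M}(d)\log k)$. The delicate point, and the one I expect to be the main obstacle, is producing the cofactors $(M/m_i)^{-1}\bmod m_i$ within the same budget, since a naive extended-gcd at every leaf would be too costly. I would resolve this by a second going-down pass that maintains, at each node of modulus $m$, the sibling cofactor $g\bmod m$ (starting from $g=1$ at the root and passing $g\,m_R\bmod m_L$ and $g\,m_L\bmod m_R$ to the two children), so that each leaf receives $(M/m_i)\bmod m_i$ in total time $\bigO(\mathsf{M}(d)\log k)$. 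Inverting this residue modulo $m_i=(x-\alpha_i)^{c_i}$ is then just a truncated power-series inversion after a Taylor shift, costing $\bigO(\mathsf{M}(c_i))$ and hence $\bigO(\mathsf{M}(d))$ in aggregate. Combining these ingredients establishes both halves of the stated $\bigO(\mathsf{M}(d)\log k)=\bigOsoft(d)$ bound.
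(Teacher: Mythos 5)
Your argument is correct, but it is worth noting that the paper does not prove this statement at all: Theorem~\ref{th:Hermite} is imported as a known result, attributed to Chin~\cite{Chin76} (see also~\cite{OlSh00}), so there is no internal proof to compare against. Your reconstruction follows the standard route and the details check out: the identification of the Hermite data at $\alpha_i$ with $P \bmod (x-\alpha_i)^{c_i}$ after a Taylor shift and rescaling by factorials (this is the one place the characteristic-zero hypothesis is needed, to invert $j!$ for $j<\max_i c_i$); the subproduct tree with per-level cost $\bigO(\mathsf{M}(d))$ via the superadditivity of $\mathsf{M}$, which the paper's running assumption that $\mathsf{M}(d)/d$ is increasing does justify; the going-down pass by fast Euclidean division for evaluation; and the going-up CRT recombination $v_{\mathrm{node}} = v_L m_R + v_R m_L$ for interpolation. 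You correctly single out the only genuinely delicate point, namely obtaining all cofactor inverses $(M/m_i)^{-1}\bmod m_i$ within the $\bigO(\mathsf{M}(d)\log k)$ budget, and your second going-down pass propagating $g\,m_R\bmod m_L$ and $g\,m_L\bmod m_R$ is a valid solution (one can verify $g\cdot m_R\equiv M/m_L \pmod{m_L}$ by induction down the tree), with the leaf inversions reducing to truncated power-series inversion since $(M/m_i)(\alpha_i)\neq 0$. In short: the paper buys this theorem by citation; you supply a correct self-contained proof of it, essentially the confluent generalization of fast multipoint evaluation and Lagrange interpolation found in the literature the paper points to.
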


\section{The new algorithm in the case $r\geqslant d$} \label{sec:RgtD}

\subsection{Multiplication by evaluation and interpolation}\label{eval-interpol-sec}

Most fast algorithms for multiplying two polynomials $P, Q \in \K
[x]_d$ are based on the evaluation-interpolation strategy. The idea is to pick
$2 d - 1$ distinct points $\alpha_0, \ldots, \alpha_{2 d - 2}$
in~$\K$, and to perform the following three steps:
\begin{enumerate}
  \item (Evaluation) Evaluate $P$ and $Q$ at $\alpha_0, \ldots, \alpha_{2 d - 2}$.
  
  \item (Inner multiplication) Compute the values $(PQ) (\alpha_i) = P  (\alpha_i) Q (\alpha_i)$ for $i < 2 d - 1$.
  
  \item (Interpolation) Recover $PQ$ from  $(PQ) (\alpha_0), \ldots, (PQ) (\alpha_{2 d - 2})$.
\end{enumerate}

The inner multiplication step requires only $\mathcal{O} (d)$ operations.
Consequently, if both the evaluation and interpolation steps can be performed
fast, then we obtain a fast algorithm for multiplying $P$ and $Q$. For
instance, if $\K$ contains a $2^p$-th primitive root of unity
with $2^{p-1} \leqslant 2 d - 1 < 2^p$, then both evaluation
and interpolation can be performed in time $\mathcal{O} (d \log d)$ using the
Fast Fourier Transform~{\cite{CoTu65}}.

For a linear differential operator $L \in \K [ x, \partial
]_{d, r}$ it is natural to consider evaluations at powers of $x$
instead of roots of unity. It is also natural to represent the evaluation of
$L$ at a~suitable number of such powers by a matrix. More precisely, given $k
\in \bN$, we may regard $L$ as an operator from $\K [ x
]_k$ to $\K [ x ]_{k + d}$. We may also regard
elements of $\K [ x ]_k$ and $\K [ x
]_{k + d}$ as column vectors, written in the canonical bases with powers
of $x$. We will denote~by
\begin{eqnarray*}
  \Phi_L^{k + d, k} & = & \left(\begin{array}{ccc}
    L ( 1 )_0 & \cdots & L ( x^{k - 1} )_0\\
    \vdots &  & \vdots\\
    L ( 1 )_{k + d - 1} & \cdots & L ( x^{k - 1} )_{k +
    d - 1}
  \end{array}\right) 
\end{eqnarray*}
the matrix of the $\K$-linear map $L:\K [ x ]_k \rightarrow \K [ x ]_{k + d}$ with respect to these bases. Given two operators $K, L$ in
$\K [ x, \partial ]_{d, r}$, we clearly have
\begin{eqnarray*}
  \Phi_{KL}^{k + 2 d, k} & = & \Phi_K^{k + 2 d, k + d} \Phi_L^{k + d, k}, \quad \text{for all} \; k\geqslant 0 .
\end{eqnarray*}
For $k = 2 r$ (or larger), the operator $KL$ can be recovered from the matrix
$\Phi_{KL}^{2 r + 2 d, 2 r}$, whence the formula
\begin{eqnarray}
  \Phi_{KL}^{2 r + 2 d, 2 r} & = & \Phi_K^{2 r + 2 d, 2 r + d} \Phi_L^{2 r +
  d, 2 r}  \label{inner-mul}
\end{eqnarray}
yields a way to multiply $K$ and $L$. For the complexity analysis, we thus
have to consider the three steps:
\begin{enumerate}
  \item (Evaluation) Computation of $\Phi_K^{2 r + 2 d, 2 r + d}$ and of
  $\Phi_L^{2 r + d, 2 r}$ from $K$ and $L$.
  
  \item (Inner multiplication) Computation of the matrix product~(\ref{inner-mul}).
  
  \item (Interpolation) Recovery of $KL$ from $\Phi_{KL}^{2 r + 2 d, 2 r}$.
\end{enumerate}
In {\cite{vdHoeven02,BoChLe08}}, this multiplication method was applied with success
to the case when $d = r$. In this ``square case'', the following result was
proved in~{\cite[\S4.2]{BoChLe08}}.

\begin{lem}
  \label{square-lem}Let $L \in \K [ x, \partial ]_{d, d}$.
  Then
  \begin{enumerate}
    \item We may compute $\Phi_L^{2 d, d}$ as a function of $L$ in time
    $\mathcal{O} ( d \, \mathsf{M} ( d ) )$;
    
    \item We may recover $L$ from $\Phi_L^{2 d, d}$
    in time $\mathcal{O} ( d \, \mathsf{M} ( d ) )$.
  \end{enumerate}
\end{lem}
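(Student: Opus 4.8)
The plan is to exploit the action of $L$ on the exponential $e^{\alpha x}$, which diagonalises $\partial$. Write the \emph{symbol} $\tilde{L}(x,\alpha)=\sum_i L_i(x)\,\alpha^i\in\K[x,\alpha]$, obtained for free by reinterpreting the coefficients of $L$. Since $\partial^i e^{\alpha x}=\alpha^i e^{\alpha x}$, we have $L\,e^{\alpha x}=\tilde{L}(x,\alpha)\,e^{\alpha x}$; and because $L$ operates only in $x$ it commutes with $\partial_\alpha$ and with evaluation at $\alpha=0$, so that $L(x^j)=\partial_\alpha^j\bigl(\tilde{L}(x,\alpha)e^{\alpha x}\bigr)\big|_{\alpha=0}$. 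In other words, the columns of $\Phi_L^{2d,d}$ are read off from the single bivariate series
\[ G(x,\alpha):=\tilde{L}(x,\alpha)\,e^{\alpha x}=\sum_{j\geqslant 0}L(x^j)\,\frac{\alpha^j}{j!}, \]
via $L(x^j)=j!\,[\alpha^j]\,G$. Truncating $e^{\alpha x}$ modulo $\alpha^d$ yields the explicit convolution $L(x^j)=\sum_{k=0}^{j}\tfrac{j!}{k!}\,x^k\,L_{j-k}(x)$, which one checks directly against $\partial^i x^j=\tfrac{j!}{(j-i)!}x^{j-i}$ (set $k=j-i$).

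For part (1) I would therefore reduce the computation of $\Phi_L^{2d,d}$ to that of $G\bmod\alpha^d=\tilde{L}(x,\alpha)\cdot E(x,\alpha)\bmod\alpha^d$, where $E(x,\alpha)=\sum_{k=0}^{d-1}\tfrac{x^k}{k!}\,\alpha^k$ is the truncation of $e^{\alpha x}$. This is a single product of two bivariate polynomials, each of degree $<d$ in $\alpha$ and degree $\leqslant 2d-2$ in $x$; by Kronecker substitution $\alpha\mapsto x^{2d}$ it becomes one univariate product of polynomials of degree $\bigO(d^2)$, costing $\mathsf{M}(\bigO(d^2))=\bigO(d\,\mathsf{M}(d))$ operations. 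Extracting the columns and rescaling column $j$ by $j!$ costs a further $\bigO(d^2)$ operations, which is negligible; all the factorials $0!,\ldots,(d-1)!$ and their inverses are precomputed in $\bigO(d)$ operations and are invertible since $\K$ has characteristic zero.

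For part (2) the same identity is run backwards. Given $\Phi_L^{2d,d}$, I would first form $G\bmod\alpha^d$ by dividing column $j$ by $j!$, and then recover the symbol via $\tilde{L}(x,\alpha)=G(x,\alpha)\cdot e^{-\alpha x}\bmod\alpha^d$. The only point to observe is that $E$ is invertible as a power series in $\alpha$, with inverse $E^{-1}=e^{-\alpha x}\bmod\alpha^d=\sum_{k=0}^{d-1}\tfrac{(-x)^k}{k!}\,\alpha^k$, so no actual power-series inversion is needed: $E\,E^{-1}\equiv 1\pmod{\alpha^d}$ holds because $e^{\alpha x}e^{-\alpha x}=1$. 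This is again a single bivariate multiplication of cost $\bigO(d\,\mathsf{M}(d))$, after which the coefficients of $L$ are read off from $\tilde{L}$. The recovered polynomial automatically has degree $<d$ in $x$ — it equals $\tilde{L}$ exactly, the higher $x$-powers cancelling because $\Phi_L^{2d,d}$ arose from a genuine operator of bidegree $<(d,d)$.

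The conceptual core is thus the generating-function identity, after which both directions collapse to a single truncated multiplication by $e^{\pm\alpha x}$ composed with a diagonal factorial rescaling. The one genuinely technical point is the complexity bookkeeping: one must verify that the bivariate products, reduced by Kronecker substitution to univariate ones of degree $\bigO(d^2)$, fit within the target $\bigO(d\,\mathsf{M}(d))$. I expect this to be the main obstacle, since it relies on the explicit estimate $\mathsf{M}(d)=\bigO(d\log d\log\log d)$, from which $\mathsf{M}(\bigO(d^2))=\bigO(d\,\mathsf{M}(d))$; the sole assumption that $\mathsf{M}(d)/d$ is increasing does \emph{not} suffice, as it yields only the opposite inequality $\mathsf{M}(d^2)\geqslant d\,\mathsf{M}(d)$.
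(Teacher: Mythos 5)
Your proof is correct, and its core identity $\sum_j L(x^j)\,\alpha^j/j! = \tilde L(x,\alpha)\,e^{\alpha x}$ is exactly the mechanism behind this lemma; note, however, that the paper itself does not prove the statement but quotes it from~\cite[\S4.2]{BoChLe08}, so the comparison is really with that reference. The substantive difference is one of packaging, and it matters for the point you flag at the end. You compress the truncated bivariate product $\tilde L\cdot E\bmod\alpha^d$ into a single univariate product of degree $\bigO(d^2)$ via Kronecker substitution, which forces you to invoke the explicit bound $\mathsf{M}(n)=\bigO(n\log n\log\log n)$ to land on $\bigO(d\,\mathsf{M}(d))$; as you correctly observe, the mere monotonicity of $\mathsf{M}(n)/n$ gives the inequality in the wrong direction. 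This detour is avoidable: the factor $E(x,\alpha)=\sum_k x^k\alpha^k/k!$ is supported on the diagonal $\{\deg_x=\deg_\alpha\}$, so multiplication by $E$ preserves the quantity $\deg_x-\deg_\alpha$ and the bivariate product splits into $2d-1$ independent univariate products in the single variable $t=x\alpha$, each between polynomials of degree $<d$ (concretely, along each diagonal one computes a binomial transform, i.e.\ a truncated product with $e^{t}$). That costs $(2d-1)\,\mathsf{M}(d)=\bigO(d\,\mathsf{M}(d))$ under the paper's sole hypothesis on $\mathsf{M}$, and it is the same diagonal decomposition the paper uses later for the reflection morphism (the polynomials $F_k,G_k$ in the proof of Theorem~\ref{th:refl}). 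The same remark applies verbatim to your interpolation step with $e^{-\alpha x}$. So the "main obstacle" you identify is real for your formulation but dissolves under the standard organization of the computation; everything else in your argument (the commutation of $L$ with $\partial_\alpha$, the invertibility of the factorials in characteristic zero, the exact cancellation $E\cdot E^{-1}\equiv 1\pmod{\alpha^d}$, and the fact that the recovered symbol has $x$-degree $<d$) is sound.
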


\subsection{Evaluation--interpolation at exponential polynomials} Assume
now that $r \geqslant d$. Then a straightforward application of the above
evaluation-interpolation strategy yields an algorithm of sub-optimal
complexity. Indeed, the matrix $\Phi_{KL}^{2 r + 2 d, 2 r}$ contains a lot of
redundant information and, since its mere total number of elements exceeds $r^2$, one cannot expect
a direct multiplication algorithm of quasi-optimal complexity
$\tilde{\mathcal{O}} (d^{\omega - 1} r)$.

In order to maintain quasi-optimal complexity in this case as well, the idea
is to evaluate at so called \emph{exponential polynomials\/} instead of ordinary
polynomials. More specifically, given $L \in \K [ x, \partial
]_{d, r}$ and $\alpha \in \K$, we will use the fact that~$L$
also operates nicely on the vector space $\K [ x ] \mathe^{\alpha
x}$. Moreover, for any $P \in \K [ x ]$, we have
\begin{eqnarray*}
  L ( P \mathe^{\alpha x} ) & = & L_{\ltimes \alpha} ( P
  ) \mathe^{\alpha x},
\end{eqnarray*}
where
\begin{eqnarray*}
  L_{\ltimes \alpha} & = & \sum_i L_i ( x )  ( \partial +
  \alpha )^i
\end{eqnarray*}
is the operator obtained by substituting $\partial + \alpha$ for $\partial$ in
$L = \sum_i L_i ( x ) \partial^i$.
Indeed, this is a consequence of the fact that, by Leibniz's rule:
\[ \Dx^i ( P \mathe^{\alpha x} ) = \left( \sum_{j \leqslant i} \binom{i}{j} 
\alpha^j \Dx^{i-j} P \right ) \mathe^{\alpha x} = ( \partial +
  \alpha )^i(P) \mathe^{\alpha x}. \]

Now let $p = \lceil r / d \rceil$ and let $\alpha_0, \ldots,
\alpha_{p - 1}$ be $p$ pairwise distinct points in~$\K$. For each
integer $k\geqslant 1$, we define the vector space
\begin{eqnarray*}
  \V_k & = & \K [ x ]_k \mathe^{\alpha_0 x}
  \oplus \cdots \oplus \K [ x ]_k \mathe^{\alpha_{p - 1}
  x}
\end{eqnarray*}
with canonical basis
\[ ( \mathe^{\alpha_0 x}, \ldots, x^{k - 1}
\mathe^{\alpha_0 x}, \mathord{\ldots \ldots}, \mathe^{\alpha_{p - 1} x},
\ldots, x^{k - 1} \mathe^{\alpha_{p - 1} x} ). \]
Then we may regard~$L$
as an operator from $\V_k$ into $\V_{k + d}$ and we will
denote by $\Phi_L^{[ k + d, k ]}$ the matrix of this operator with
respect to the canonical bases. By what precedes, this matrix is block
diagonal, with $p$ blocks of size $d$:
\begin{eqnarray*}
  \Phi_L^{[ k + d, k ]} & = & \left(\begin{array}{ccc}
    \Phi_{L_{\ltimes \alpha_0}}^{k + d, k} &  & \\
    & \ddots & \\
    &  & \Phi_{L_{\ltimes \alpha_{p - 1}}}^{k + d, k}
  \end{array}\right) .
\end{eqnarray*}
Let us now show that the operator $L$ is uniquely determined by the matrix
$\Phi_L^{[ 2 d, d ]}$, and that this gives rise to an efficient
algorithm for multiplying two operators in $\K [ x, \partial
]_{d, r}$.

\begin{lem}
  \label{partial-eval-interpol-lem}Let $L \in \K [ x, \partial
  ]_{d, r}$ with $r \geqslant d$. Then
  \begin{enumerate}
    \item We may compute $\Phi_L^{[ 2 d, d ]}$ as a function of $L$
    in time $\mathcal{O} ( d \, \mathsf{M} ( r ) \log r )$;
    
    \item We may recover $L$ from the matrix $\Phi_L^{[ 2 d, d ]}$ in time
    $\mathcal{O} ( d \, \mathsf{M} ( r ) \log r )$.
  \end{enumerate}
\end{lem}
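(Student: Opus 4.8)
The plan is to treat both parts symmetrically by reducing everything to one Hermite evaluation/interpolation (Theorem~\ref{th:Hermite}) per coefficient-slice of $L$, together with cheap rearrangements. Since $\Phi_L^{[2d,d]}$ is block diagonal, it suffices to handle the $p$ blocks $\Phi_{L_{\ltimes\alpha_j}}^{2d,d}$, $j<p$. Writing $L=\sum_i L_i(x)\Dx^i$ with $L_i=\sum_n L_{i,n}x^n$, I would introduce the $d$ univariate polynomials $\ell_n(y)=\sum_i L_{i,n}y^i$, each of degree $<r$; these merely repackage the coefficients of $L$ and cost nothing to form. The heart of the argument is the identity
\[ \bigl(\Phi_{L_{\ltimes\alpha_j}}^{2d,d}\bigr)_{s,m}=\sum_{l\geqslant 0}\binom{m}{l}\,\ell_{s-m+l}^{(l)}(\alpha_j),\qquad 0\leqslant s<2d,\ 0\leqslant m<d, \]
with the convention $\ell_n\equiv 0$ for $n\notin\{0,\dots,d-1\}$. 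I would prove it by expanding $L_{\ltimes\alpha_j}(x^m)=\sum_i L_i(x)(\Dx+\alpha_j)^i(x^m)$ with Leibniz's rule, extracting the coefficient of $x^s$, and recognizing the inner sum $\sum_i L_{i,n}\binom{i}{l}\alpha_j^{i-l}=\ell_n^{(l)}(\alpha_j)/l!$. Thus every entry of every block is a fixed (binomial, $j$-independent) combination of the Hermite data $\{\ell_n^{(l)}(\alpha_j):n<d,\ l<d,\ j<p\}$.

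For part~(1) I would first compute, for each $n<d$, the Hermite evaluation of $\ell_n$ at $\alpha_0,\dots,\alpha_{p-1}$ with multiplicity $d$ at each point. Viewing $\ell_n\in\K[x]_{pd}$ (legitimate since $\deg\ell_n<r\leqslant pd$), the total multiplicity is $pd=\lceil r/d\rceil d=\bigO(r)$ over $p\leqslant r$ points, so Theorem~\ref{th:Hermite} gives cost $\bigO(\mathsf{M}(r)\log r)$ per polynomial (using $\mathsf{M}(pd)=\bigO(\mathsf{M}(r))$ by quasi-linearity), hence $\bigO(d\,\mathsf{M}(r)\log r)$ for all $d$ of them. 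It then remains to assemble the block entries. Here I would read the identity along the diagonals $u=s-m$: fixing $j$ and $u$ and setting $a_l=\ell_{u+l}^{(l)}(\alpha_j)$, the wanted entries are $b_m=\sum_l\binom{m}{l}a_l$, the binomial transform of $(a_l)$, which is obtained from a single truncated product of exponential generating series $\sum_m b_m t^m/m!=\mathe^{t}\sum_l a_l t^l/l!$ at order $d$. This costs $\bigO(\mathsf{M}(d))$ per pair $(j,u)$, and there are $\bigO(pd)=\bigO(r)$ such pairs, for a total of $\bigO(r\,\mathsf{M}(d))$; since $\mathsf{M}(d)/d$ is increasing this is $\bigO(d\,\mathsf{M}(r))$ and is absorbed by the Hermite cost.

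Part~(2) is the inverse composite. From the block entries I would recover the Hermite data diagonal by diagonal, inverting the binomial transform by the dual series product $\sum_l a_l t^l/l!=\mathe^{-t}\sum_m b_m t^m/m!$ (again $\bigO(\mathsf{M}(d))$ per diagonal); the triangular shape of the transform, together with $pd\geqslant r$, guarantees that the entries actually present suffice to recover all needed $a_l$. Then, for each $n<d$, I would Hermite-interpolate $\ell_n$ from its values-and-derivatives, reading off the coefficients $L_{i,n}$ and thereby $L$. The cost is again dominated by the $d$ Hermite interpolations, namely $\bigO(d\,\mathsf{M}(r)\log r)$.

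The main obstacle is conceptual rather than computational: one must recognize that evaluation of the operator at the exponential polynomials $x^m\mathe^{\alpha_j x}$ linearizes, through the displayed identity, into Hermite data of the slice-polynomials $\ell_n$, and that the dependence on the input degree $m$ is precisely a binomial transform along diagonals. Once this is seen, the quasi-optimal bound follows from Theorem~\ref{th:Hermite} and a negligible amount of fast univariate arithmetic; the only point requiring care is checking, in the interpolation direction, that the available matrix entries determine all the needed Hermite data, which is exactly the triangularity of the binomial transform combined with $pd\geqslant r$.
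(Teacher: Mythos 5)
Your proof is correct, and its first half is exactly the paper's argument: you expand $L=\sum_n x^n\ell_n(\Dx)$ and observe that the data needed for each block $\Phi^{2d,d}_{L_{\ltimes\alpha_j}}$ is precisely the order-$d$ Taylor data of the slices $\ell_n$ at the points $\alpha_j$, obtained by $d$ Hermite evaluations (resp.\ interpolations) of total multiplicity $pd\geqslant r$, for the dominant cost $\bigO(d\,\mathsf{M}(r)\log r)$. Where you diverge is in the passage between this Taylor/Hermite data and the actual matrix blocks. The paper treats each truncated operator $L^{\ast}_{\ltimes\alpha_j}\in\K[x,\partial]_{d,d}$ as a square-case object and invokes Lemma~\ref{square-lem} (imported from~\cite{BoChLe08}) as a black box, at cost $\bigO(r\,\mathsf{M}(d))=\bigO(d\,\mathsf{M}(r))$. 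You instead derive the explicit entry formula $\bigl(\Phi_{L_{\ltimes\alpha_j}}^{2d,d}\bigr)_{s,m}=\sum_{l}\binom{m}{l}\ell_{s-m+l}^{(l)}(\alpha_j)$ (which checks out) and recognize that, along each diagonal $u=s-m$, the map between Hermite data and matrix entries is a binomial transform, hence a single truncated product with $\mathe^{\pm t}$ of exponential generating series; this yields the same $\bigO(r\,\mathsf{M}(d))$ bound and, in the inverse direction, the unit-triangularity of the transform justifies recoverability. In effect you have inlined a self-contained proof of (the needed instance of) Lemma~\ref{square-lem}; this buys independence from the external reference and makes the matrix structure explicit, at the price of some index bookkeeping (e.g.\ checking that for negative diagonals the absent entries $b_m$, $m<|u|$, vanish so the padded transform is still invertible), which you handle correctly. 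Everything else, including the absorption of $\bigO(r\,\mathsf{M}(d))$ into $\bigO(d\,\mathsf{M}(r))$ via the monotonicity of $\mathsf{M}(d)/d$, matches the paper.
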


\begin{proof}
  For any operator $L = \sum_{i < d, \, j < r} L_{i,j} x^i \partial^j $ in $\K [ x, \partial ]_{d, r}$, we
  define its truncation $L^{\ast}$ at order $\mathcal{O} ( \partial^d )$ by
  \begin{eqnarray*}
    L^{\ast} & = & \sum_{i, j < d} L_{i, j} x^i \partial^j .
  \end{eqnarray*}
  Since $L - L^{\ast}$ vanishes on $\K [ x ]_d$, we notice
  that $\Phi_L^{2 d, d} = \Phi_{L^{\ast}}^{2 d, d}$.
  
  If $L \in \K [ \partial ]_r$, then $L^{\ast}$ can be
  regarded as the power series expansion of $L$ at $\partial = 0$ and order
  $d$. More generally, for any $i \in \{ 0, \ldots, p - 1 \}$, the
  operator $L_{\ltimes \alpha_i}^{\ast} (\partial)  = L (\partial +
    \alpha_i)^{\ast}$ coincides with the Taylor series
  expansion at $\partial = \alpha_i$ and order $d$:
  \begin{eqnarray*}
    L_{\ltimes \alpha_i}^{\ast} (\partial) = L (\alpha_i)  +   L' (\alpha_i) \partial + \! \cdots \! + \! \tfrac{1}{(d - 1)
    !} L^{(d - 1)} (\alpha_i) \partial^{d - 1} .
  \end{eqnarray*}
  In other words, the computation of the truncated operators $L_{\ltimes
  \alpha_0}^{\ast}, \ldots, L_{\ltimes \alpha_{p - 1}}^{\ast}$ as a function
  of $L$ corresponds to a Hermite evaluation at the points $\alpha_i$, with
  multiplicity $c_i = d$ at each point $\alpha_i$.
  By~Theorem~\ref{th:Hermite}, this computation can be performed in time
  $\mathcal{O} ( \mathsf{M} ( pd ) \log  p ) =
  \mathcal{O} ( \mathsf{M} ( r ) \log r )$.
  Furthermore, Hermite interpolation allows us to recover $L$ from $L_{\ltimes
  \alpha_0}^{\ast}, \ldots, L_{\ltimes \alpha_{p - 1}}^{\ast}$ with the same
  time complexity $\mathcal{O} ( \mathsf{M} ( r ) \log r )$.
  
  Now let $L \in \K [ x, \partial ]_{d, r}$ and consider
  the expansion of $L$ in $x$
  \begin{eqnarray*}
    L ( x, \partial ) & = & L_0 ( \partial ) + \cdots +
    x^{d - 1} L_{d - 1} ( \partial ) .
  \end{eqnarray*}
  For each $i$, one Hermite evaluation of $L_i$ allows us to compute the
  $L_{\ltimes \alpha_j, i}^{\ast}$ with $j < p$ in time $\mathcal{O} (
  \mathsf{M} ( r ) \log r )$. The operators $L_{\ltimes
  \alpha_j}^{\ast}$ with $j < p$ can therefore be computed in time
  $\mathcal{O} ( d \, \mathsf{M} ( r ) \log r )$. By
  Lemma~\ref{square-lem}, we need $\mathcal{O} ( r \, \mathsf{M} ( d ) ) =
  \mathcal{O} ( d \, \mathsf{M} ( r ) )$ additional operations
  in order to obtain $\Phi_L^{[ 2 d, d ]}$. Similarly, given $\Phi_L^{[ 2 d, d ]}$,
  Lemma~\ref{square-lem} allows us to recover the operators $L_{\ltimes
  \alpha_j}^{\ast}$ with $j < p$ in time
  $\mathcal{O} ( d \, \mathsf{M} ( r )  )$.
  Using $d$ Hermite interpolations, we also
  recover the coefficients $L_i$ of $L$ in time
  $\mathcal{O} ( d \, \mathsf{M} ( r ) \log r )$.
\end{proof}

\begin{thm}
  \label{partial-1-th} Assume $r \geqslant d$ and let $K, L \in \K [ x, \partial ]_{d,
  r}$. Then the product $KL$ can be computed in time
  ${\mathcal{O} ( d^{\omega - 1} r + d \, \mathsf{M} ( r ) \log r
  )}$.
\end{thm}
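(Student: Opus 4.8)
The plan is to transport the evaluation--interpolation scheme of \S\ref{eval-interpol-sec}, run at exponential polynomials as in Lemma~\ref{partial-eval-interpol-lem}, from the bidegree of the factors to the bidegree of the \emph{product}. Since $K,L\in\K[x,\partial]_{d,r}$, the product lies in $\K[x,\partial]_{2d,2r}$, and $2r\geqslant 2d$. The starting point is the multiplicative identity underlying~(\ref{inner-mul}): because $L(P\mathe^{\alpha x})=L_{\ltimes\alpha}(P)\mathe^{\alpha x}$, applying $K$ gives $(KL)_{\ltimes\alpha}=K_{\ltimes\alpha}L_{\ltimes\alpha}$ for every $\alpha\in\K$ (equivalently, $\partial\mapsto\partial+\alpha$ is the automorphism of conjugation by $\mathe^{\alpha x}$). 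Assembling this over the $p=\lceil r/d\rceil$ points $\alpha_0,\dots,\alpha_{p-1}$ yields the block-diagonal identity
\[
  \Phi_{KL}^{[k+2d,k]}=\Phi_{K}^{[k+2d,k+d]}\,\Phi_{L}^{[k+d,k]}\qquad(k\geqslant 0),
\]
the exact analogue of~(\ref{inner-mul}) over the spaces $\V_k$.

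Next I would fix $k=2d$, giving $\Phi_{KL}^{[4d,2d]}=\Phi_{K}^{[4d,3d]}\,\Phi_{L}^{[3d,2d]}$, and carry out the three steps. For the interpolation step, observe that $\Phi_{KL}^{[4d,2d]}$ is precisely the matrix $\Phi_{M}^{[2d',d']}$ attached to $M=KL\in\K[x,\partial]_{d',r'}$ with $d'=2d$ and $r'=2r$; since $r'\geqslant d'$ and $\lceil r'/d'\rceil=p$, the recovery is literally the interpolation half of Lemma~\ref{partial-eval-interpol-lem}, costing $\bigO(2d\,\mathsf{M}(2r)\log 2r)=\bigO(d\,\mathsf{M}(r)\log r)$ (using $\mathsf{M}(2r)=\bigO(\mathsf{M}(r))$). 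For the evaluation step I would reuse the mechanism of that lemma's proof: a Hermite evaluation (Theorem~\ref{th:Hermite}) of each of the $d$ coefficients of $L$, resp. $K$, at $\alpha_0,\dots,\alpha_{p-1}$ with multiplicity $2d$, resp. $3d$, produces the truncations of $L_{\ltimes\alpha_i}$ at order $2d$ and of $K_{\ltimes\alpha_i}$ at order $3d$; Lemma~\ref{square-lem}, applied in size $2d$ (resp.\ $3d$) and restricted to the needed rows, then turns each truncation into its diagonal block. As every block has dimension $\Theta(d)$ and $p\cdot\Theta(d)=\Theta(r)$, this also costs $\bigO(d\,\mathsf{M}(r)\log r)$.

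It remains to multiply the two block-diagonal matrices. There are $p=\lceil r/d\rceil$ pairs of blocks, each product being a $4d\times 3d$ by $3d\times 2d$ matrix multiplication; padding to square size $\Theta(d)$, one block costs $\bigO(d^{\omega})$, so the inner multiplication costs $\bigO(p\,d^{\omega})=\bigO(d^{\omega-1}r)$. Summing the three contributions gives the claimed bound $\bigO(d^{\omega-1}r+d\,\mathsf{M}(r)\log r)$.

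The part needing the most care is the bookkeeping forced by the choice $k=2d$. One must check that this value is large enough for $\Phi_{KL}^{[4d,2d]}$ to determine $KL$, which is the inequality $p\cdot 2d\geqslant 2r$ guaranteeing enough Hermite data to reconstruct coefficients of order $<2r$. One must also confirm that Lemmas~\ref{square-lem} and~\ref{partial-eval-interpol-lem} go through verbatim for the mildly rectangular, doubled dimensions $2d,3d,4d$ occurring here --- in particular that truncating $L_{\ltimes\alpha_i}$ and $K_{\ltimes\alpha_i}$ at orders $2d$ and $3d$ suffices to fill the blocks acting on $\K[x]_{2d}$ and $\K[x]_{3d}$. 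Everything else is a routine reassembly of bounds already proved.
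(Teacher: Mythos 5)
Your proposal is correct and follows essentially the same route as the paper: the paper's proof likewise forms the block-diagonal matrices $\Phi_K^{[4d,3d]}$ and $\Phi_L^{[3d,2d]}$ via Lemma~\ref{partial-eval-interpol-lem} (applied with suitably dilated parameters), multiplies the $p=\lceil r/d\rceil$ blocks for $\mathcal{O}(d^{\omega}p)=\mathcal{O}(d^{\omega-1}r)$, and recovers $KL$ from $\Phi_{KL}^{[4d,2d]}$ by the interpolation half of the same lemma. Your additional bookkeeping (the identity $(KL)_{\ltimes\alpha}=K_{\ltimes\alpha}L_{\ltimes\alpha}$, the check $p\cdot 2d\geqslant 2r$, and the extension of Lemmas~\ref{square-lem} and~\ref{partial-eval-interpol-lem} to the rectangular sizes $2d,3d,4d$) just makes explicit what the paper leaves implicit.
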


\begin{proof}
  Considering $K$ and $L$ as operators in $\K [ x, \partial
  ]_{3 d, 3 r}$, Lemma~\ref{partial-eval-interpol-lem} implies that the
  computation of $\Phi_K^{[ 4 d, 3 d ]}$ and $\Phi_L^{[ 3 d, 2
  d ]}$ as a function of $K$ and $L$ can be done in time $\mathcal{O}
  ( d \, \mathsf{M} ( r ) \log r )$. The multiplication
  \begin{eqnarray*}
    \Phi_{KL}^{[ 4 d, 2 d ]} & = & \Phi_K^{[ 4 d, 3 d ]}
    \Phi_L^{[ 3 d, 2 d ]}
  \end{eqnarray*}
  can be done in time $\mathcal{O} ( d^{\omega} p ) =\mathcal{O}
  ( d^{\omega - 1} r )$. Lemma~\ref{partial-eval-interpol-lem}
  finally implies that we may recover $KL$ from $\Phi_{KL}^{[ 4 d, 2 d
  ]}$ in time $\mathcal{O} ( d \, \mathsf{M} ( r ) \log r
  )$.
\end{proof}

\section{The new algorithm in the case $d > r$} \label{sec:DgtR}

Any differential operator~$L\in\Wx_{d,r}$ can be written in a unique form \[ L
= \sum_{i<r,j<d} L_{i,j}x^j\partial^i, \quad \text{for some scalars\quad}
L_{i,j} \in \K. \] This representation, with $x$ on the left and $\Dx$ on the
right, is called the \emph{canonical form of\/}~$L$.

Let $\varphi:\Wx \rightarrow \Wx$ denote the map  defined by \[
\varphi \left(\sum_{i<r,j<d} L_{i,j}x^j\partial^i \right) = \sum_{i<r,j<d}
L_{i,j}\Dx^j (-x)^i.\] In other words, $\varphi$ is the unique $\K$-algebra
automorphism of $\Wx$ that keeps the elements of $\K$ fixed, and is defined on
the generators of $\Wx$ by $\varphi(x)=\Dx$ and $\varphi(\Dx)=-x$. We will
call $\varphi$ the \emph{reflection morphism\/} of $\Wx$. The map~$\varphi$
enjoys the nice property that it sends $\Wx_{d,r}$ onto $\Wx_{r,d}$. In
particular, to an operator whose degree is higher than its order, $\varphi$
associates a ``mirror operator'' whose order is higher than its degree.

\subsection{Main idea of the algorithm in the case $d > r$} If $d > r$, 
then the reflection morphism $\varphi$ is the key to our fast
multiplication algorithm for operators in $\Wx_{d,r}$, since it allows us to
reduce this case to the previous case when $r \geqslant d$. More precisely,
given $K, L$ in $\Wx_{d,r}$ with $d > r$, the main steps of the
algorithm are:

\begin{enumerate}
    \item[(S1)] compute the canonical forms of~$\varphi(K)$ and~$\varphi(L)$, 
    \item[(S2)] compute the product $M = \varphi(K) \varphi(L)$ of  operators~$\varphi(K) \in \Wx_{r,d}$ and~$\varphi(L)\in \Wx_{r,d}$, 
using the algorithm described in the previous section, and 
    \item[(S3)] return the (canonical form of the) operator ${KL = \varphi^{-1}(M)}$.
 \end{enumerate}

Since $d > r$, step (S2) can be performed in complexity $\bigOsoft(r^{\omega-1} d)$ using the results of Section~\ref{sec:RgtD}. In the next subsection, we will prove that both steps (S1) and (S3) can be performed in $\bigOsoft(rd)$ operations in $\K$. This will enable us to conclude the proof of Theorem~\ref{th:main}.

\subsection{Quasi-optimal computation of reflections} We now show that the
\emph{reflection\/} and the \emph{inverse reflection\/} of a differential
operator can be computed quasi-optimally. The idea is that performing
reflections can be interpreted in terms of Taylor shifts for polynomials,
which can be computed in quasi-linear time using the algorithm from~\cite{AhStUl75}.

A first observation is that the composition $\varphi \circ \varphi$ is equal
to the involution $\psi:\Wx \rightarrow \Wx$ defined by \[ \psi
\left(\sum_{i<r,j<d} L_{i,j}x^j\partial^i \right) = \sum_{i<r,j<d} (-1)^{i+j}
L_{i,j}x^j\partial^i.\] As a direct consequence of this fact, it follows that
the map $\varphi^{-1}$ is equal to $\varphi \circ \psi$. Since $\psi(L)$ is
already in canonical form, computing $\psi(L)$ only consists of sign changes,
which can be done in linear time $\bigO(dr)$. Therefore, computing the inverse
reflection $\varphi^{-1}(L)$ can be performed within the same cost as
computing the direct reflection $\varphi(L)$, up to a linear overhead
$\bigO(rd)$.

In the remainder of this section, we focus on the fast computation of direct
reflections. The key observation is encapsulated in the next lemma. Here, and
in what follows, we use the convention that the entries of a matrix
corresponding to indices beyond the matrix sizes are all zero.
\begin{lem}\label{lem:key}
	Assume that $(p_{i,j})$ and $(q_{i,j})$ are two matrices in $\K^{r \times d}$ such that
	\[ \sum_{i,j} q_{i,j} x^i \Dx^j = \sum_{i,j} p_{i,j} \Dx^j x^i.\]
Then
 \[ i! \, q_{i,j} = \sum_{k \geqslant 0} \binom{j+k}{k} (i+k)! \, p_{i+k,j+k},\] 
where we use the convention that $p_{i,j}=0$ as soon as $i\geqslant r$ or $j \geqslant d$.
\end{lem}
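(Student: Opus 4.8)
The plan is to reduce the entire claim to a single \emph{normal-ordering} identity that rewrites each monomial $\Dx^j x^i$ (with $\Dx$ on the left) back into canonical form (with $x$ on the left). Concretely, I would first establish
\[
  \Dx^j x^i \;=\; \sum_{k \geqslant 0} \binom{j}{k}\binom{i}{k}\, k!\; x^{i-k}\,\Dx^{j-k}.
\]
This is the only genuinely non-commutative ingredient; once it is available, the lemma becomes a matter of collecting coefficients and simplifying binomials.

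To prove the identity I would induct on $j$, using the single commutation relation $\Dx\, x^i = x^i \Dx + i\, x^{i-1}$, which is just Leibniz's rule applied to $x^i$. Writing $\Dx^{j} x^i = \Dx^{j-1}\bigl(x^i \Dx + i\, x^{i-1}\bigr)$ and invoking the induction hypothesis on the two pieces $\Dx^{j-1} x^i$ and $\Dx^{j-1} x^{i-1}$, the coefficient of $x^{i-k}\Dx^{j-k}$ combines the two contributions through Pascal's rule in the index $j$ together with the absorption identity $i\binom{i-1}{k-1} = k\binom{i}{k}$, which closes the induction. (Alternatively, one can read the identity off the action of $\Dx^j x^i$ on the monomial basis of $\K[x]$, since $\Dx$ is differentiation and $x$ is multiplication.)

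With the identity in hand, I would substitute it into the right-hand side $\sum_{i,j} p_{i,j}\, \Dx^j x^i$ of the hypothesis and reindex by $a = i-k$, $b = j-k$. Because the canonical form of an operator in $\Wx$ is unique, comparing the coefficient of $x^a \Dx^b$ on both sides yields
\[
  q_{a,b} \;=\; \sum_{k \geqslant 0} \binom{b+k}{k}\binom{a+k}{k}\, k!\; p_{a+k,\,b+k}.
\]
Using $\binom{a+k}{k}\, k! = (a+k)!/a!$ and renaming $(a,b)$ as $(i,j)$ gives precisely the asserted formula $i!\,q_{i,j} = \sum_{k} \binom{j+k}{k}(i+k)!\, p_{i+k,j+k}$; the stated convention $p_{i,j}=0$ for $i\geqslant r$ or $j\geqslant d$ makes every sum finite and accounts for the terms falling outside the $r\times d$ range.

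The only real content is the normal-ordering identity, so I expect the main (and minor) obstacle to be the bookkeeping of the double binomial in its inductive proof, together with checking that no boundary terms are lost when $i=0$ or when shifted indices exceed $r$ or $d$. Everything after that is a uniqueness-of-canonical-form coefficient comparison.
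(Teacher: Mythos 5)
Your proposal is correct and follows essentially the same route as the paper: your normal-ordering identity $\Dx^j x^i = \sum_k \binom{j}{k}\binom{i}{k}k!\,x^{i-k}\Dx^{j-k}$ is, after dividing by $i!$, exactly the commutation rule $\partial^j \frac{x^i}{i!} = \sum_k \binom{j}{k}\frac{x^{i-k}}{(i-k)!}\partial^{j-k}$ that the paper derives from Leibniz's rule, and both arguments conclude by substituting into the hypothesis and extracting coefficients via uniqueness of the canonical form. The only difference is cosmetic: you prove the key identity by induction on $j$, whereas the paper simply cites Leibniz's differentiation rule.
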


\begin{proof} Leibniz's differentiation rule implies the commutation rule
\begin{eqnarray*} \partial^j \frac{x^i}{i!} & = & \sum_{k = 0}^j \binom{j}{k}
\frac{x^{i - k}}{( i - k ) !} \partial^{j - k}. \end{eqnarray*}
Together with the hypothesis, this implies the equality 
\begin{eqnarray*}
	\sum_{i,j}
(i!\, q_{i,j}) \frac{x^i}{i!} \Dx^j \!\! & = & \!\!\! \sum_{i,j}
(i!\,p_{i,j}) \Dx^j \frac{x^i}{i!} \\ \!\! & = & \!\!\!  \sum_{k \geqslant 0}\left(
\sum_{i,j} (i!\,p_{i,j}) \binom{j}{k} \frac{x^{i - k}}{( i - k
) !} \partial^{j - k} \right). 
\end{eqnarray*}
We conclude by extraction of coefficients.
\end{proof}

\begin{thm} \label{th:refl} 
  Let $L \in \Wx_{d, r}$.
  Then we may compute $\varphi(L)$ and $\varphi^{-1}(L)$ using
  $\mathcal{O} ( \min ( d \, \mathsf{M} ( r ), r \, \mathsf{M} ( d ) ) )= \bigOsoft(rd)$ operations in $\K$.
\end{thm}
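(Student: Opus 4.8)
The plan is to reduce the computation of a reflection to a collection of independent \emph{Taylor shifts}, one per diagonal of the coefficient matrix, and then to invoke the quasi-optimal Taylor shift algorithm of~\cite{AhStUl75}. Since $\varphi^{-1} = \varphi \circ \psi$ and $\psi$ merely changes signs of coefficients (cost $\bigO(rd)$), it suffices to treat the direct reflection $\varphi$; the inverse reflection then costs the same up to a linear overhead.

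First I would put $\varphi(L)$ into the ``$\partial$-on-left'' shape to which Lemma~\ref{lem:key} applies. Writing $L = \sum_{i<r,\,j<d} L_{i,j}\,x^j\partial^i$ in canonical form, the definition of $\varphi$ gives $\varphi(L) = \sum_{i<r,\,j<d} (-1)^i L_{i,j}\,\partial^j x^i$, that is, $\varphi(L) = \sum_{i,j} p_{i,j}\,\partial^j x^i$ with $p_{i,j} = (-1)^i L_{i,j} \in \K^{r\times d}$. Lemma~\ref{lem:key} then produces the canonical coefficients $q_{i,j}$ of $\varphi(L) = \sum_{i,j} q_{i,j}\,x^i\partial^j$ via
\[ i!\,q_{i,j} = \sum_{k\geqslant 0} \binom{j+k}{k}\,(i+k)!\,p_{i+k,\,j+k}. \]

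The key structural observation I would then make is that every term on the right keeps $i-j$ fixed, so the formula decouples along the diagonals $i - j = s$, with $-(d-1)\leqslant s \leqslant r-1$. On the diagonal $s$, set $A_t = (s+t)!\,p_{s+t,\,t}$ and $B_j = (j+s)!\,q_{j+s,\,j}$; the substitution $t = j + k$ turns the formula into $B_j = \sum_{t\geqslant j}\binom{t}{j}A_t$, which is exactly the coefficient extraction $B_j = [y^j]\,f(y+1)$ for $f(y) = \sum_t A_t\,y^t$. Hence each diagonal amounts to a single Taylor shift by $1$. Because $\K$ has characteristic zero all factorials are invertible; I would precompute $0!,\ldots,(r+d)!$ and their inverses in $\bigO(r+d)$ operations by batch inversion, so that forming the $A_t$ and recovering $q_{i,j} = B_j/i!$ costs $\bigO(rd)$ in total, while~\cite{AhStUl75} performs the shift on a diagonal of length $\ell$ in $\bigO(\mathsf{M}(\ell))$.

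The step I expect to be the main obstacle is the complexity bookkeeping, namely getting the $\min$ form rather than a cruder bound. Each diagonal has length $\ell_s \leqslant \min(d,r)$ and $\sum_s \ell_s = rd$. Using that $\mathsf{M}(n)/n$ is increasing, $\mathsf{M}(\ell_s) \leqslant \ell_s\,\mathsf{M}(\min(d,r))/\min(d,r)$, whence
\[ \sum_s \mathsf{M}(\ell_s) \;\leqslant\; \frac{\mathsf{M}(\min(d,r))}{\min(d,r)}\sum_s \ell_s \;=\; \max(d,r)\,\mathsf{M}(\min(d,r)). \]
Finally I would record the identity $\max(d,r)\,\mathsf{M}(\min(d,r)) = \min\!\big(d\,\mathsf{M}(r),\, r\,\mathsf{M}(d)\big)$, which again follows from the monotonicity of $\mathsf{M}(n)/n$, yielding the announced bound $\bigO(\min(d\,\mathsf{M}(r), r\,\mathsf{M}(d))) = \bigOsoft(rd)$ for both $\varphi$ and $\varphi^{-1}$. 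Once the shifts are organized by diagonals and summed in this way, the algebraic reduction itself is immediate from Lemma~\ref{lem:key}.
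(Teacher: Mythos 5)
Your overall route is the paper's: reduce $\varphi^{-1}$ to $\varphi$ via $\psi$, apply Lemma~\ref{lem:key}, observe that the formula decouples along the diagonals $i-j=s$, and compute each diagonal by a Taylor shift by $1$ using~\cite{AhStUl75}. Your sign bookkeeping and the identity $\max(d,r)\,\mathsf{M}(\min(d,r))=\min(d\,\mathsf{M}(r),r\,\mathsf{M}(d))$ are both correct. The one place where you diverge is that you run a single unified analysis over all $d+r-1$ diagonals, charging $\bigO(\mathsf{M}(\ell_s))$ to a diagonal of length $\ell_s$, whereas the paper splits into the cases $r\geqslant d$ and $d>r$ and uses the two transposed normalizations of the formula (by $i!$ and by $j!$ respectively).

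That per-diagonal charge is the gap. On a sub-diagonal $s<0$ your polynomial $f(y)=\sum_t A_t\,y^t$ is supported on the interval $t\in[-s,\,t_1]$ with $t_1=\min(d-1,\,r-1-s)$: it has only $\ell_s$ nonzero coefficients but degree up to $\max(d,r)-1$, and the map $B_j=\sum_{t\geqslant j}\binom{t}{j}A_t$ is \emph{not} the Taylor shift of the length-$\ell_s$ vector $(A_{-s},\ldots,A_{t_1})$ — reindexing by $u=t+s$ replaces the kernel $\binom{t}{j}$ by $\binom{u-s}{j}$, which is not $\binom{u}{j+s}$. A literal application of~\cite{AhStUl75} to $f$ therefore costs $\mathsf{M}(t_1)$, and when $d\gg r$ the sub-diagonals alone then contribute on the order of $d\,\mathsf{M}(d)$, which exceeds the target $d\,\mathsf{M}(r)$. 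The gap is fixable: rescale by $\frac{t!}{(t-t_0)!}$ and $\frac{j!}{(j-t_0)!}$ (where $t_0=-s$), which turns the kernel into $\binom{t-t_0}{j-t_0}$ and makes the diagonal a genuine shift of a degree-$(\ell_s-1)$ polynomial, the rescalings costing $\bigO(\ell_s)$ with precomputed factorials; alternatively, compute $g(y+1)$ for $f=y^{t_0}g$ and recover the needed window of $(y+1)^{t_0}g(y+1)$ by a middle product. The paper avoids the issue altogether by choosing, in each of the two cases, the normalization for which every $F_k$ lies in $\K[x]_{\min(d,r)}$, so that $\bigO(\max(d,r))$ full shifts of cost $\mathsf{M}(\min(d,r))$ suffice. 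You should either add one of these fixes or adopt the paper's case split; as written, the claimed bound is not established for $d\neq r$.
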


\begin{proof} 
   We first deal with the case $r \geqslant d$.~If ${L = \sum_{i<r,\, j<d} \, p_{i,j} \, x^j \, \Dx^i}$, then by the first equality of Lemma~\ref{lem:key},
the reflection $\varphi(L)$ is equal to

\[\varphi(L) = \sum_{i<r,j<d} p_{i,j}\Dx^j (-x)^i = \sum_{i<r,j<d} q_{i,j}
(-x)^j \Dx^i,\] where 
\begin{eqnarray} \label{eq:refl} 
	i! \, q_{i,j} = \sum_{\ell \geqslant 0} \binom{j+\ell}{j} (i+\ell)!
\, p_{i+\ell,j+\ell}.
\end{eqnarray}
  For any fixed $k$ with $1-r\leqslant k \leqslant d-1$, let us introduce $G_k = \sum_i i!q_{i, i + k} x^{i + k}$ and $F_k
  = \sum_i i! p_{i, i + k} x^{i + k}$. These polynomials belong to $\K[x]_{d}$, since $p_{i,j}=q_{i,j}=0$ for $j\geqslant d$.
If $k \leqslant 0$, then Equation~\eqref{eq:refl} translates into
  \begin{eqnarray*}
    G_k ( x ) & = & F_k ( x + 1 ).
  \end{eqnarray*}
Indeed, Equation~\eqref{eq:refl} with $j=i+k$ implies that $G_k(x)$ is equal to
 \begin{eqnarray*}
\sum_{i,\ell} \binom{i+k+\ell}{i+k}(i+\ell)! \, p_{i+\ell,i+k+\ell} \, x^{i+k} & \\
= \sum_{j,s} j! p_{j,j+k}\binom{j+k}{s}x^s = F_k(x+1).&  
\end{eqnarray*}

Similarly, if $k > 0$, then the coefficients of $x^i$ in $G_k
  ( x )$ and $F_k ( x + 1 )$ still coincide for all $i \geqslant k$. 
In particular, we may compute $G_{1 - r}, \ldots, G_{d - 1}$
  from $F_{1 - r}, \ldots, F_{d - 1}$ by means of~$d + r\leqslant 2r$
  Taylor shifts of polynomials in $\K[x]_{d}$. Using the fast algorithm for Taylor shift in~\cite{AhStUl75}, this can be done in time 
$\mathcal{O} ( r \, \mathsf{M} ( d ) )$.

Once the coefficients of the $G_k$'s are available, the computation of the coefficients of $\varphi(L)$ requires $\bigO(dr)$ additional operations.

If~$d > r$, then we notice that the equality~\eqref{eq:refl} is equivalent to
\[ j! \, q_{i,j} = \sum_{\ell \geqslant 0} \binom{i+\ell}{i} (j+\ell)!
\, p_{i+\ell,j+\ell},\]
as can be seen by expanding the binomial coefficients.
Redefining $G_k := \sum_i i! q_{i+k,i}x^{i+k}$ and $F_k := \sum_{i}i!p_{i+k,i}x^{i+k}$,
similar arguments as above show that $\varphi(P)$ can be computed
using $\bigO( d \, \mathsf{M} ( r ) )$ operations in~$\K$.

By what has been said at the beginning of this section,
we finally conclude that the inverse reflection ${\varphi^{-1}(L) = \varphi(\psi(L))}$
can be computed for the same cost as the direct reflection $\varphi(L)$.
\end{proof}

\subsection{Proof of Theorem~\ref{th:main} in the case $d > r$} 
We will prove a slightly better result:

\begin{thm}
  \label{partial-2-th} Assume $d > r$ and  $K, L \in \K [ x, \partial ]_{d, r}$.
  Then the product $KL$ can be computed using
  ${\mathcal{O} ( r^{\omega - 1} d + r \, \mathsf{M} ( d ) \log d  )}$ operations in $\K$.
\end{thm}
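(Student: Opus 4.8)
The plan is to combine the three-step reduction outlined in Section~\ref{sec:DgtR} with the complexity bounds already established for each individual step. Since we are in the case $d > r$, the reflection morphism $\varphi$ sends both $K$ and $L$ from $\Wx_{d,r}$ into $\Wx_{r,d}$, where the roles of degree and order are swapped. The operators $\varphi(K)$ and $\varphi(L)$ now have order $d$ larger than their degree $r$, so they fall squarely into the regime $r \geqslant d$ (with the names of the parameters exchanged), for which Theorem~\ref{partial-1-th} provides a fast multiplication algorithm.

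Concretely, I would proceed as follows. First (step S1), compute the canonical forms of $\varphi(K)$ and $\varphi(L)$. By Theorem~\ref{th:refl}, each reflection costs $\mathcal{O}(\min(d\,\mathsf{M}(r), r\,\mathsf{M}(d))) = \mathcal{O}(r\,\mathsf{M}(d))$ operations in~$\K$, which is within the target bound. Second (step S2), multiply the two operators $\varphi(K), \varphi(L) \in \Wx_{r,d}$. Applying Theorem~\ref{partial-1-th} with the parameters $(d,r)$ there instantiated as $(r,d)$ here (legitimate since now the order $d$ dominates the degree $r$), this product is computed in time $\mathcal{O}(r^{\omega-1} d + r\,\mathsf{M}(d)\log d)$. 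Note the output $M = \varphi(K)\varphi(L)$ lies in $\Wx_{2r,2d}$, so one should invoke Theorem~\ref{partial-1-th} on operators padded to the appropriate size; this only inflates constants. Third (step S3), recover $KL = \varphi^{-1}(M)$ by applying the inverse reflection. Again by Theorem~\ref{th:refl}, this costs $\mathcal{O}(r\,\mathsf{M}(d))$ operations, since $\varphi^{-1} = \varphi \circ \psi$ and $\psi$ is free up to sign changes.

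The correctness of the reduction rests on the fact that $\varphi$ is a $\K$-algebra automorphism of $\Wx$, so $\varphi(KL) = \varphi(K)\varphi(L)$, whence $KL = \varphi^{-1}(\varphi(K)\varphi(L))$. Summing the three costs, the dominant terms are the matrix-multiplication contribution $\mathcal{O}(r^{\omega-1}d)$ from step S2 and the Hermite-interpolation contribution $\mathcal{O}(r\,\mathsf{M}(d)\log d)$, while the reflection steps S1 and S3 contribute only $\mathcal{O}(r\,\mathsf{M}(d))$, which is absorbed. This yields the claimed total $\mathcal{O}(r^{\omega-1}d + r\,\mathsf{M}(d)\log d)$.

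I do not anticipate a genuine mathematical obstacle here, as all the heavy lifting has been done in Theorems~\ref{partial-1-th} and~\ref{th:refl}; the proof is essentially an orchestration. The main point requiring care is bookkeeping of the bidegrees: after reflection the operators live in $\Wx_{r,d}$ with order exceeding degree, so one must verify that Theorem~\ref{partial-1-th} applies with the parameters correctly transposed, and that enlarging to $\Wx_{2r,2d}$ to accommodate the product does not break the hypothesis $r \geqslant d$ (in the transposed variables) nor worsen the asymptotic bound. Checking that the $r\,\mathsf{M}(d)$ terms from the reflections are dominated by the $r\,\mathsf{M}(d)\log d$ term from step S2 is immediate under the standard assumption that $\mathsf{M}(d)/d$ is increasing.
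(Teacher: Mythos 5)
Your proposal is correct and follows essentially the same route as the paper's own proof: reflect via $\varphi$, multiply in $\Wx_{r,d}$ using Theorem~\ref{partial-1-th}, and pull back with $\varphi^{-1}$, with the reflection costs bounded by Theorem~\ref{th:refl} and absorbed into the final estimate. The only superfluous worry is the padding to $\Wx_{2r,2d}$: Theorem~\ref{partial-1-th} is stated for two input operators in a given bidegree class and already accounts for the growth of the product internally, so no extra padding argument is needed.
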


\begin{proof}
Assume that
$K$ and $L$ are two operators in $\Wx_{d,r}$ with $d > r$. Then $\varphi(K)$
and $\varphi(L)$ belong to $\Wx_{r,d}$, and their canonical forms can be
computed in $\bigO(d\mathsf{M} ( r ))$  operations by Theorem~\ref{th:refl}. Using the
algorithm from Section~\ref{sec:RgtD}, we may compute $M = \varphi(K)
\varphi(L)$ in
$\mathcal{O} ( r^{\omega - 1} d + r \,\mathsf{M} ( d ) \log d )$
operations. Finally,
$KL=\varphi^{-1}(M)$ can be computed in $\bigO(r\,\mathsf{M} ( d ))$ operations by
Theorem~\ref{th:refl}. We conclude by adding up the costs of these
three steps.
\end{proof}

\section*{Acknowledgment}

The authors would like to thank the three referees for their useful remarks.



\def\cprime{$'$}

\end{document}